\newtheorem{theorem}{Theorem}
\begin{document}

\title{Exact Lyapunov exponents of the generalized Boole transformations}


\author{Ken Umeno and Ken-ichi Okubo}

\maketitle

\begin{abstract}%
The generalized Boole transformations have rich behavior ranging from  the 
\textit{{mixing}} phase with the Cauchy {invariant} measure to the \textit{dissipative} phase through
the \textit{infinite ergodic} phase with the Lebesgue measure. 
In this Letter, {by giving the proof of mixing property for $0<\alpha<1$}
we show an {\it analytic} formula of the Lyapunov exponents $\lambda$ 
{which are explicitly parameterized in terms of the parameter $\alpha$ of the generalized Boole transformations 
	for {the} whole region $\alpha>0$
	and bridge those three phase {\it continuously}. 
	We found the different scale behavior of the Lyapunov exponent near $\alpha=1$ using analytic formula with the parameter $\alpha$.
	In particular, for $0<\alpha<1$, we  then prove an existence of extremely sensitive dependency of Lyapunov exponents, where
	the absolute values of the derivative of Lyapunov exponents 
	with respect to the parameter $\alpha$ diverge to infinity  
	in the limit of $\alpha\to 0$,  and $\alpha \to 1$.} This result shows the computational complexity on the numerical simulations of 
the Lyapunov exponents near $\alpha \simeq$ 0, 1.  
\end{abstract}

\paragraph{1. Introduction-Generalized Boole transformations}
Lyapunov exponent has  extensively studied for characterizing chaotic systems and plays an important and essential role as {a} chaos indicator
characterizing the degree of chaoticity quantitatively.  However, one can find it difficult to obtain an analytic formula of Lyapunov exponents 
for given chaotic systems in general. There are few analytic formulas of ergodic {invariant measures} \cite{Umeno97}. 
The analytic {parametric density functions of ergodic invariant measure} can be used for analytically estimating Lyapunov exponents 
for such specific systems. 
However, {for most cases,} that analytic result cannot be extended for the whole region of parameter space. 
Consequently, we are forced to use of numerical simulations to research {a} parameter dependence of Lyapunov exponents.
In the case of the logistic map, Huberman and Rudnick \cite{Huberman} found a scaling behavior of Lyapunov exponent from zero to positive and 
Pomeau and Manneville \cite{Pomeau} found a similar behavior of Lyapunov exponent. Both of them use numerical simulations.

Here, we consider the Boole transformation \cite{Boole} given by 
\begin{eqnarray}
	x_{n+1} = Tx_n = x_n - \frac{1}{x_n},
\end{eqnarray}
which is known to preserves the Lebesgue measure $dx$ as an invariant measure \cite{Adler}. 
The Lebesgue measure $dx$ is an infinite measure. 
{Thus, we cannot normalize the invariant measure as a probability measure.}

One can generalize the Boole transformation as {two-parameter map family as} Aaronson \cite{Aaronson97} 
treats the {following} generalized Boole transformations which are defined as
\begin{eqnarray}
	x_{n+1} = {T_{\alpha, \beta}}x_n=\alpha x_n -\frac{\beta}{x_n},~{\alpha>0}, \beta>0,  \label{Generalized Boole}
\end{eqnarray}
preserving the Cauchy measure $\mu({dx})$ {for $0<\alpha<1$ given by}
\begin{eqnarray}
	\mu({dx}) = \frac{\sqrt{\beta(1-\alpha)}}{\pi[x^2(1-\alpha)+\beta]}{dx}. \label{Cauchy measure}
\end{eqnarray}
As a subset of the generalized Boole transformations, the map at $\alpha=\beta=1/2$ corresponds to the exactly solvable chaos map \cite{Umeno97}
given by  Ref. \cite{Umeno98}
\begin{eqnarray}
	x_{n+1}=\frac{1}{2}\left(x_n-\frac{1}{x_n} \right). \label{Umeno map}
\end{eqnarray}
One of the authors showed that this map can be expressed by the double-angle formula of cot function having Lyapunov 
exponent $\lambda =\log 2$ and one can use
Eq.(\ref{Umeno map}) in order to generate general L\'evy stable distributions via their superposition \cite{Umeno98}.

{The generalized Boole transformations have the standard forms which do not depend on $\beta$. That is, by substituting 
	$x_n$ by $\sqrt{\frac{\beta}{\alpha}}y_n$, Eq. (\ref{Generalized Boole}) is {replaced by}
	\begin{eqnarray}
		\sqrt{\frac{\beta}{\alpha}}y_{n+1} &=& \alpha \sqrt{\frac{\beta}{\alpha}}y_n-\beta \sqrt{\frac{\alpha}{\beta}}\frac{1}{y_n}, \nonumber
	\end{eqnarray}
	{which is nothing but}
	\begin{eqnarray}
		y_{n+1} &=& \alpha\left(y_n-\frac{1}{y_n}\right). \label{modified Generalized Boole}
	\end{eqnarray}
	The transformation $P_{\alpha, \beta} : x_n \mapsto \sqrt{\frac{\beta}{\alpha}}y_n$ is {a one-to-one diffeomorphism}. Then the transformations of 
	Eq. (\ref{modified Generalized Boole}) have the same Lyapunov exponent for fixed $\alpha$ with this topological conjugacy relation. 
	Consequently, the generalized Boole transformations
	have the same Lyapunov exponents which are independent of $\beta$.}

\paragraph{2. {Analytic} Formula of Lyapunov exponents for the generalized Boole transformations}
Here, we prove the following theorem. 
\begin{theorem}\label{T0.1}
	The generalized Boole transformations {for $0<\alpha<1$} have the Lyapunov exponents given by 
	\begin{eqnarray*}
		\lambda(\alpha, \beta) = \log \left(1 + 2\sqrt{\alpha(1-\alpha)}\right). 
	\end{eqnarray*}
\end{theorem}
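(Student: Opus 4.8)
The plan is to realize the Lyapunov exponent as a spatial average and reduce it to a single definite integral against the Cauchy density. I would begin by invoking the mixing property established for $0<\alpha<1$: mixing implies ergodicity with respect to the finite invariant measure $\mu$ of \eqref{Cauchy measure}, so Birkhoff's ergodic theorem identifies the time-averaged expansion rate with the space average, giving $\lambda=\int_{-\infty}^{\infty}\log|T_{\alpha,\beta}'(x)|\,\mu(dx)$ for $\mu$-a.e.\ orbit. By the $\beta$-independence coming from the diffeomorphism $P_{\alpha,\beta}$ that reduces \eqref{Generalized Boole} to the standard form $T(y)=\alpha(y-1/y)$ of \eqref{modified Generalized Boole} (i.e.\ $\beta=\alpha$), it suffices to treat this standard form, whose invariant density from \eqref{Cauchy measure} simplifies to $\mu(dy)=\frac{a}{\pi(y^2+a^2)}\,dy$ with scale $a=\sqrt{\alpha/(1-\alpha)}$.

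Next I would compute $T'(y)=\alpha(1+1/y^2)>0$ and split the integrand as $\log|T'(y)|=\log\alpha+\log(y^2+1)-\log(y^2)$, so that $\lambda=\log\alpha+\int\log(y^2+1)\,\mu(dy)-\int\log(y^2)\,\mu(dy)$. The computational heart is the elementary Poisson-type identity $\int_{-\infty}^{\infty}\log(y^2+b^2)\,\frac{a}{\pi(y^2+a^2)}\,dy=2\log(a+b)$ for $a,b>0$. I would establish this by differentiating under the integral sign in $b$, reducing to the partial-fraction integral $\int(y^2+a^2)^{-1}(y^2+b^2)^{-1}\,dy=\pi/[ab(a+b)]$, which yields derivative $2/(a+b)$; integrating and fixing the constant from the large-$b$ asymptotics $\sim 2\log b$ gives the identity. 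Applying it with $b=1$ and with $b\to 0$ gives $\int\log(y^2+1)\,\mu(dy)=2\log(a+1)$ and $\int\log(y^2)\,\mu(dy)=2\log a$.

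Substituting, $\lambda=\log\alpha+2\log\frac{a+1}{a}$; using $1/a=\sqrt{(1-\alpha)/\alpha}$ one expands $\bigl(1+\sqrt{(1-\alpha)/\alpha}\bigr)^{2}=[1+2\sqrt{\alpha(1-\alpha)}]/\alpha$, and the factors of $\alpha$ cancel to give $\lambda=\log(1+2\sqrt{\alpha(1-\alpha)})$, as claimed. (As a sanity check, $\alpha=\tfrac12$ returns $\log 2$, matching the exactly solvable case.)

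The main obstacle is not the algebra but the analytic justification of the first step: one must confirm that $\mu$ is genuinely the ergodic (indeed mixing) probability measure for every $0<\alpha<1$, and that $\log|T'|$ is $\mu$-integrable so that Birkhoff applies and the interchange of limit and integral in the $b\to 0$ step is legitimate. Integrability is reassuring, since $\log(y^2)$ has only an integrable logarithmic singularity at $y=0$ against the bounded Cauchy density and $\log|T'|\to\log\alpha$ at infinity; the ergodicity rests squarely on the mixing proof promised in the abstract, which carries the genuine content beyond the explicit quadrature.
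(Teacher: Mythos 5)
Your proposal is correct, and it shares the paper's overall skeleton---invoke mixing (Theorem 2) to get ergodicity of the Cauchy measure, express $\lambda$ as the space average $\int\log|T'|\,d\mu$, scale out $\beta$, and split off $\log\alpha$---but the evaluation of the resulting integral goes by a genuinely different route. The paper writes $\log|\alpha+\alpha/y^2|=\log\alpha+\log|1+iy|+\log|1-iy|-2\log|y|$ and computes each piece by contour integration: the regular terms along a large semicircle, and the singular $\log|y|$ term along an indented contour with a separate estimate showing the small-circle contribution of radius $r$ vanishes as $r\to 0$. You instead prove a single real-variable lemma, the Poisson-kernel identity
\begin{equation*}
\int_{-\infty}^{\infty}\log(y^2+b^2)\,\frac{a}{\pi(y^2+a^2)}\,dy \;=\; 2\log(a+b), \qquad a,b>0,
\end{equation*}
by differentiation under the integral sign plus partial fractions, fixing the constant from the large-$b$ asymptotics; the two needed integrals are then the cases $b=1$ and $b\to 0$. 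Your route buys elementarity (no complex analysis at all) and a cleaner treatment of the singularity at the origin---the paper's contour indentation is replaced by a monotone-convergence limit in $b$, which you correctly flag as the step needing justification. It is also slightly more general, since one lemma subsumes both integrals, whereas the paper handles each term with its own contour. Both computations of course land on $\log\alpha+2\log(1+p)-2\log p=\log\bigl(1+2\sqrt{\alpha(1-\alpha)}\bigr)$ with $p=\sqrt{\alpha/(1-\alpha)}$, and both proofs rest equally on the mixing result, which carries the dynamical content; you are right that this is where the real work lies, and the paper defers it to the Appendix exactly as you defer it.
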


Note that the Lyapunov exponents do not depend on the parameter $\beta$ but purely depend on the parameter $\alpha$.

\begin{proof}
	The ergodicity of {similar kinds of this} generalized Boole transformations was shown by  Kempermann \cite{Kempermann} {and Letac \cite{Letac}}.
	{We prove the theorem 2 concerning the mixing property of the generalized Boole transformations for $0<\alpha<1$ in the Appendix. 
		\begin{theorem}
			The generalized Boole transformations in $0<\alpha<1$ {have mixing property}.
		\end{theorem}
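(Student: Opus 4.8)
The plan is to realize the generalized Boole transformation as the boundary restriction of a holomorphic self-map of the upper half-plane $\mathbb{H}$ and then exploit the theory of inner functions. First I would check that $T_{\alpha,\beta}(z)=\alpha z-\beta/z$ maps $\mathbb{H}$ into itself: for $z=x+iy$ with $y>0$ one computes $\operatorname{Im}T_{\alpha,\beta}(z)=y\bigl(\alpha+\beta/|z|^2\bigr)>0$. Its fixed-point equation $(\alpha-1)z^2=\beta$ has, for $0<\alpha<1$, the interior solution $z_*=i\sqrt{\beta/(1-\alpha)}\in\mathbb{H}$, with multiplier $T_{\alpha,\beta}'(z_*)=\alpha+\beta/z_*^2=2\alpha-1$, so $|T_{\alpha,\beta}'(z_*)|<1$. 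Crucially, the Cauchy density in Eq.~(\ref{Cauchy measure}) is exactly the Poisson kernel of $\mathbb{H}$ based at $z_*$, i.e.\ $\mu$ is the harmonic measure at $z_*$; this is the structural fact that makes the boundary map measure-preserving, and it is where the apparent $\beta$-dependence collapses.

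Next I would conjugate to the unit disk. Let $\phi:\mathbb{H}\to\mathbb{D}$ be the M\"obius map sending $z_*\mapsto 0$, and set $g=\phi\circ T_{\alpha,\beta}\circ\phi^{-1}$. Then $g:\mathbb{D}\to\mathbb{D}$ is an inner function (its nontangential boundary values lie on $\partial\mathbb{D}$ a.e., inherited from the fact that $T_{\alpha,\beta}$ is real on $\mathbb{R}\setminus\{0\}$), it fixes the origin, and $\phi$ carries $\mu$ to the normalized Lebesgue measure $d\theta/2\pi$ on $\partial\mathbb{D}$. Thus the boundary dynamics of $T_{\alpha,\beta}$ on $(\mathbb{R},\mu)$ is measure-theoretically isomorphic to that of $g$ on $(\partial\mathbb{D},d\theta/2\pi)$, and mixing is preserved under this isomorphism. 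Since $T_{\alpha,\beta}$ is a genuine degree-two rational map, $g$ is not a M\"obius automorphism; combined with $g(0)=0$ and $g'(0)=2\alpha-1$ the Schwarz lemma gives $|g'(0)|<1$, and the Denjoy--Wolff theorem yields $g^n\to 0$ locally uniformly on $\mathbb{D}$.

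The mixing itself I would then establish by a Fourier/Koopman computation on the spanning family $\{z^j:j\in\mathbb{Z}\}$ of $L^2(\partial\mathbb{D})$. Writing $U\psi=\psi\circ g$ for the Koopman operator, $U^n z^j=(g^n)^j$ on the boundary, so
\begin{eqnarray*}
	\langle U^n z^j, z^k\rangle = \widehat{(g^n)^j}(k),
\end{eqnarray*}
the $k$-th Fourier coefficient of $(g^n)^j$. For $j\ge 1$ the function $(g^n)^j$ lies in $H^\infty$ and vanishes at $0$, so its negative and zeroth coefficients vanish, while for $k\ge 1$ a Cauchy estimate on $|z|=r<1$ gives $|\widehat{(g^n)^j}(k)|\le r^{-k}\bigl(\sup_{|z|=r}|g^n|\bigr)^j\to 0$ by local uniform convergence; the case $j\le -1$ follows by conjugation, since $(g^n)^j=\overline{(g^n)^{|j|}}$ a.e.\ on $\partial\mathbb{D}$. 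Hence $\langle U^n z^j, z^k\rangle\to\delta_{j0}\delta_{k0}=\bigl(\int z^j\bigr)\overline{\bigl(\int z^k\bigr)}$ for all $j,k$, and since the $U^n$ are contractions and trigonometric polynomials are dense, bilinearity and a $3\varepsilon$-argument upgrade this to $\int f\,(h\circ g^n)\,\tfrac{d\theta}{2\pi}\to(\int f)(\int h)$ for all $f,h\in L^2$, which is mixing.

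The main obstacle, I expect, is not the correlation-decay computation but the careful verification of the inner-function framework: showing that the holomorphic extension of $T_{\alpha,\beta}$ has nontangential boundary values coinciding with the real map $\mu$-a.e., that conjugation by $\phi$ is a genuine measure isomorphism intertwining the two Koopman operators, and that the Cauchy measure is precisely the harmonic measure at $z_*$ (consistent with the $\beta$-independence noted after the theorem). Once these identifications are in place, the argument reduces to the standard fact that a non-rotational inner function fixing an interior point induces an exact, hence mixing, boundary endomorphism.
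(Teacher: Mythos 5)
Your proposal is correct, and it takes a genuinely different route from the paper. The paper's own proof (Appendix) conjugates the standard form of Eq.~(\ref{modified Generalized Boole}) by $y=-\cot\pi\theta$ to a degree-two interval map $\bar{T}_\alpha$ on $[0,1)$ and argues combinatorially: the level-$k$ cylinders on which $\bar{T}_\alpha^k$ is onto are asserted to have invariant measure $2^{-k}$, the preimages $\bar{T}_\alpha^{-n}A$ are asserted to equidistribute exactly over them, and mixing follows as an exact identity $\mu(\bar{T}_\alpha^{-n}A\cap B)=\mu(A)\mu(B)$. You instead exploit the complex-analytic structure: $T_{\alpha,\beta}$ as a holomorphic self-map of $\mathbb{H}$ with interior fixed point $z_*=i\sqrt{\beta/(1-\alpha)}$ of multiplier $2\alpha-1$, the identification of the invariant Cauchy law with harmonic measure at $z_*$ (this is exact: the Poisson kernel of $\mathbb{H}$ at $i\gamma$ with $\gamma=\sqrt{\beta/(1-\alpha)}$ is precisely Eq.~(\ref{Cauchy measure})), conjugation to an inner function $g$ of $\mathbb{D}$ with $g(0)=0$, Denjoy--Wolff convergence $g^n\to 0$, and Cauchy-estimate decay of the Fourier coefficients of $(g^n)^j$. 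Your correlation computation is sound in every case (coefficients with $k\le 0$ vanish because $(g^n)^j\in H^\infty$ vanishes at the origin; coefficients with $k\ge 1$ decay by locally uniform convergence; $j\le -1$ follows by conjugation since $|g^n|=1$ a.e.\ on the boundary), and the upgrade from trigonometric polynomials to all of $L^2$ by contractivity of the Koopman operators is standard. The technical debts you flag at the end (L\"owner's lemma, giving measure preservation and nonsingularity of the boundary map; the a.e.\ identification of boundary values of iterates with iterates of boundary values) are exactly the right ones, and all are classical facts about inner functions.

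As for what each approach buys: the paper's argument requires no complex analysis, but its two central claims hold exactly only at $\alpha=1/2$, where $\bar{T}_\alpha$ is the doubling map and the invariant measure on $[0,1)$ is Lebesgue. For $\alpha\neq 1/2$ both fail: for instance at $\alpha=3/4$ in the standard form, the four level-2 cylinders correspond in $x$-coordinates to $(-\infty,-1)$, $(-1,0)$, $(0,1)$, $(1,\infty)$, whose measures under the invariant Cauchy law of scale $\sqrt{3}$ are $1/3$, $1/6$, $1/6$, $1/3$ rather than $1/4$ each; correspondingly, the two inverse branches carry unequal transfer-operator weights, tending to $\alpha$ and $1-\alpha$ as $|x|\to\infty$, so preimages do not equidistribute. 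Your inner-function route avoids this entirely: every step survives scrutiny uniformly in $0<\alpha<1$, it yields exactness (strictly stronger than mixing, via the standard theorem you cite, or mixing directly via your Fourier argument), and it explains structurally why neither the invariant measure class nor the mixing property depends on $\beta$. The price is the heavier standard machinery; the reward is a proof that is actually airtight where the paper's elementary one, as written, is not.
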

		Thus, we can prove the ergodicity with respect to the Cauchy measure in Eq. (\ref{Cauchy measure}) for $0<\alpha<1$.}
	By the ergodicity, we can calculate the Lyapunov exponents of the generalized Boole transformations as
	\begin{eqnarray*}
		\lambda(\alpha,\beta) 
		= \int_{-\infty}^{\infty} \log\left|\alpha+\frac{\beta}{x^2}\right| \frac{\sqrt{\beta(1-\alpha)}}{\pi[x^2(1-\alpha)+\beta]} dx.
	\end{eqnarray*}
	
	Here,  we substitute~ $\sqrt{\alpha/\beta}x$~ by~ $y$, then we obtain:
	\begin{eqnarray*}
		\lambda(\alpha,\beta) 	&=& \frac{1}{\pi}\sqrt{\frac{\alpha}{1-\alpha}}\int_{-\infty}^{\infty} \log\left|\alpha+\frac{\alpha}{y^2}\right|
		\frac{dy}{y^2+\left( \sqrt{\frac{\alpha}{1-\alpha}}\right)^2 },\\
		\log\left|\alpha+\frac{\alpha}{y^2}\right| &=& \log \alpha + \log|1+iy|+\log|1-iy|-2\log|y|.
	\end{eqnarray*}
	Let us substitute $p$ for $\sqrt{\frac{\alpha}{1-\alpha}}$ and put $I_1,\ I_2,\ I_3$ and $I_4$ as,
	\begin{eqnarray*}
		\lambda(\alpha,\beta) &=& I_1 + I_2 +I_3+I_4,\\
		I_1 &=& \frac{p}{\pi}\int_{-\infty}^{\infty}  \frac{\log \alpha}{y^2+p^2}dy, \\
		I_2 &=& \frac{p}{\pi}\int_{-\infty}^{\infty}  \frac{\log|1+iy|}{y^2+p^2}dy, \\
		I_3 &=& \frac{p}{\pi}\int_{-\infty}^{\infty}  \frac{\log|1-iy|}{y^2+p^2}dy, \\
		I_4 &=& -\frac{2p}{\pi}\int_{-\infty}^{\infty}  \frac{\log|y|}{y^2+p^2}dy. 
	\end{eqnarray*}
	
	Evaluate $I_1$. We consider a function $f_1(z)$ defined by
	\begin{eqnarray*}
		f_1(z) = \frac{\log \alpha}{z+ip}, 
	\end{eqnarray*}
	which is regular on upper half-plane. Thus we can integrate along the path C1 in Figure \ref{integral1}. Then we have the relations
	\begin{figure}
		\begin{center}
			\includegraphics[width=6.5cm]{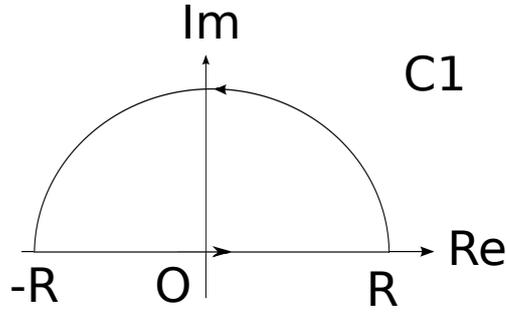}
			\caption{The integral path of $I_1, I_2 ~\mbox{and}~ I_3$}
			\label{integral1}
		\end{center}
	\end{figure}
	\begin{eqnarray}
		I_1 = \log \alpha,
	\end{eqnarray}
	as $R\to\infty$.
	
	Next, as for {$I_3$},  we consider a function {$f_{3}(z)$} defined by
	\begin{eqnarray*}
		{f_3}(z) = \frac{\log|1-iz|}{z+ip},
	\end{eqnarray*}
	which is regular on upper half-plane. Thus we can integrate {$f_{3}(z)$} along the path C1 in Figure \ref{integral1}.  We get: 
	\begin{eqnarray}
		{I_{3}} = \log|1+p|,
	\end{eqnarray}
	as $R\to\infty$.
	
	In terms of evaluating {$I_{2}$}, by applying a transformation as $y\to-y$ we have the relation {$I_{2}=I_{3}$}.
	\begin{figure}
		\begin{center}
			\hspace*{1cm}
			\includegraphics[width=6.5cm]{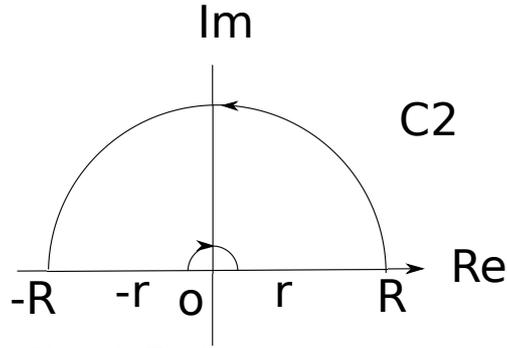}
			\vspace*{-5mm}
			\caption{The integral path of $I_4$}
			\label{integral2}
		\end{center}
	\end{figure}

	As for $I_4$, we consider a function $f_4(z)$ defined by
	\begin{eqnarray}
		f_4(z) = \frac{\log|z|}{z+ip}
	\end{eqnarray}
	which is regular along the path C2 in Figure \ref{integral2} and on the inner part of C2. Then by integrating of $f_4(z)$ along C2, we obtain
	\begin{eqnarray}
		I_4 = -2\log p,
	\end{eqnarray}
	as $r\to0, R\to \infty$. As for $r\to 0$, we substitute $z$ by $e^{i\theta}$ and we have the relation:
	\begin{eqnarray*}
		\left|\int_{\pi}^{0}\frac{\log r + 2\pi}{r^2e^{2i\theta}-p^2}ire^{i\theta}d\theta \right|
		\leq \int_{0}^{\pi}\frac{\left|\log r + 2\pi\right|}{\left|r^2e^{2i\theta}-p^2\right|}r d\theta.
	\end{eqnarray*}
	Then we set $r<p/2$ and we obtain:
	\begin{eqnarray*}
		\frac{1}{\left|r^2e^{2i\theta}+p^2\right|}<\frac{1}{|p|^2-r^2}<\frac{4}{3}\frac{1}{|p|^2}.
	\end{eqnarray*}
	Using these facts, we have
	\begin{eqnarray*}
		\int_{0}^{\pi}\frac{\left|\log r + 2\pi\right|}{\left|r^2e^{2i\theta}-p^2\right|}r d\theta
		&<& \int_{0}^{\pi} \frac{4}{3}\frac{\left|\log r + 2\pi\right|}{|p|^2}rd\theta,\\
		&\leq& \frac{4\pi}{3}\frac{r\left|\log r +2\pi\right|}{|p|^2} \to 0~{\rm as}~r\to0.
	\end{eqnarray*}
	
	Therefore, we obtain analytic formula of the Lyapunov exponents for the generalized Boole transformations $\lambda(\alpha,\beta)$ as
	\begin{eqnarray}
		\lambda(\alpha,\beta) &=& \log \alpha+\log|1+p|+\log|1+p|-2\log p,\nonumber\\
		&=& \log\left( 1+2\sqrt{\alpha(1-\alpha)}\right) \label{Lyapnov指数}.
	\end{eqnarray}
\end{proof}

The Lyapunov exponents do not depend on $\beta$, so that we can change the notation from $\lambda(\alpha, \beta)$ to $\lambda(\alpha)$. 
The invariant measure {$\mu(dx)$ is Cauchy} and smooth with respect to the Lebesgue measure.  Thus,  according to the Pesin identity \cite{Pesin,Ruelle}, we can calculate KS-entropy $h(\alpha)$ analytically as
\begin{eqnarray}
	h(\alpha) = \int_{-\infty}^{\infty}\lambda(\alpha)  {\mu(dx)}=\lambda(\alpha).
\end{eqnarray}

Figure \ref{LyapunovB} shows the comparison results of our numerical estimate of the Lyapunov exponents for the generalized Boole transformations with 

\noindent $\alpha =0.1,\ 0.2,\ 0.3,\ 0.4,\ 0.5,\ 0.6,\ 0.7,\ 0.8$ and $0.9$ using the long double precision and our analytic formula of the Lyapunov exponents. The initial condition of computation is $x_0=9.3$. The coincidence is remarkable as shown in Figure \ref{LyapunovB}.
\begin{figure}[h]
	\centering
	\vspace*{-1cm}
	\includegraphics[width=8cm]{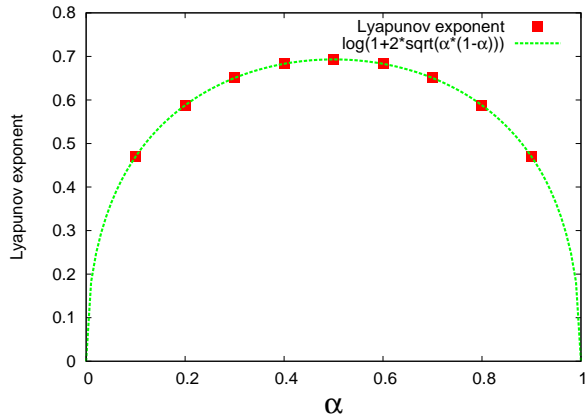}
	\vspace{1cm}
	\caption{The numerical results of the finite Lyapunov exponents and analytic solution of the generalized Boole transformation.}
	\label{LyapunovB}
\end{figure}

{
	Equation (\ref{Lyapnov指数}) converges to zero in the limit of $\alpha \to 0$ and $\alpha \to 1$. However, different phenomena occur {for} $\alpha=0$
	and $\alpha=1$ {respectively}. In the case of $\alpha=0$, every orbits are periodic because
	\begin{eqnarray}
		T^2_{\alpha, \beta}x=x. \nonumber
	\end{eqnarray}
	On the other hand, at $\alpha=1$, {\textit{subexponential chaos} \cite{Akimoto15,Akimoto}} occurs. 
	Then, in calculation of finite Lyapunov exponents, the value does not converge to a
	constant but converges in distribution. 
}

\paragraph{3. Lyapunov exponents {for} $\alpha>1$}
Here, we prove another theorem. 
\begin{theorem}\label{T0.2}
	The generalized Boole transformations {for $\alpha>1$} have the Lyapunov exponents given by 
	\begin{eqnarray*}
		\lambda(\alpha, \beta) = \log \alpha. 
	\end{eqnarray*}
\end{theorem}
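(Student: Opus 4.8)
The plan is to treat $\alpha>1$ as the genuinely \emph{dissipative} regime: the density in Eq.~(\ref{Cauchy measure}) ceases to be real (the factor $\sqrt{\beta(1-\alpha)}$ turns imaginary), so there is no finite absolutely continuous invariant measure to integrate against, and the argument of Theorem~\ref{T0.1} is unavailable. Instead I would compute the exponent directly from its definition $\lambda=\lim_{n\to\infty}\frac1n\sum_{k=0}^{n-1}\log\bigl|T_{\alpha,\beta}'(x_k)\bigr|$ with $T_{\alpha,\beta}'(x)=\alpha+\beta/x^2$, by showing that for Lebesgue-almost every $x_0$ the orbit escapes to infinity geometrically, so that $\beta/x_k^2\to0$ and each summand tends to $\log\alpha$. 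The relevant landmarks are the two real fixed points $\pm x_*$ with $x_*=\sqrt{\beta/(\alpha-1)}$ (the complex fixed points of the $0<\alpha<1$ case have migrated onto the real axis), at which $T_{\alpha,\beta}'(x_*)=2\alpha-1>1$, so both are repelling.

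First I would isolate the escape mechanism. For $x>x_*$ one checks $T_{\alpha,\beta}(x)>x$ and $T_{\alpha,\beta}(x)>x_*$, so $(x_*,\infty)$ is forward invariant and every orbit in it is strictly increasing; since the only fixed point is the repelling endpoint, such orbits diverge to $+\infty$, and in fact $x_{n+1}\ge \alpha x_n-\beta/x_*$ forces geometric growth $x_n\gtrsim\alpha^{n}$. By the oddness of $T_{\alpha,\beta}$ the same holds on $(-\infty,-x_*)$. It therefore suffices to prove that the survivor set $\Lambda=\bigcap_{n\ge0}T_{\alpha,\beta}^{-n}\bigl([-x_*,x_*]\bigr)$ of orbits that never leave $[-x_*,x_*]$ is Lebesgue-null.

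The hard part is this null-set statement, and the naive expansion bound is not good enough: on $(-x_*,x_*)$ the map has two monotone branches, each only guaranteed to expand by the factor $\alpha$, giving the useless estimate $\mathrm{Leb}(S_{n+1})\le(2/\alpha)\,\mathrm{Leb}(S_n)$ for the finite-stage survivors $S_n=\bigcap_{k=0}^{n}T_{\alpha,\beta}^{-k}([-x_*,x_*])$. What saves the argument is an exact transfer-operator identity. Writing the two real preimages of $y$ as the roots $a_\pm=\bigl(y\pm\sqrt{y^2+4\alpha\beta}\bigr)/(2\alpha)$ of $\alpha a^2-ya-\beta=0$, one verifies that both lie in $(-x_*,x_*)$ whenever $y\in(-x_*,x_*)$, and, using $\alpha a_\pm^2=ya_\pm+\beta$ to rewrite $1/T_{\alpha,\beta}'(a)=a^2/(ya+2\beta)$, that
\[
\sum_{T_{\alpha,\beta}(a)=y}\frac{1}{\bigl|T_{\alpha,\beta}'(a)\bigr|}=\frac{a_+^2}{ya_++2\beta}+\frac{a_-^2}{ya_-+2\beta}=\frac1\alpha
\]
identically in $y$ (the $y$-dependent factors cancel after using $a_++a_-=y/\alpha$ and $a_+a_-=-\beta/\alpha$). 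Hence the change of variables $y=T_{\alpha,\beta}(x)$ on each branch gives the sharp recursion $\mathrm{Leb}(S_{n+1})=\tfrac1\alpha\,\mathrm{Leb}(S_n)$, so $\mathrm{Leb}(S_n)=2x_*\,\alpha^{-n}\to0$ and $\mathrm{Leb}(\Lambda)=0$. This exact cancellation---reflecting that $T_{\alpha,\beta}$ dilates Lebesgue measure by the uniform factor $\alpha$---is the crux of the proof and is what makes the conclusion hold for \emph{all} $\alpha>1$ rather than only $\alpha>2$.

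Finally I would assemble the exponent. For a.e.\ $x_0$ the orbit eventually enters the escape region and then $x_n\gtrsim\alpha^{n}$, so $\log\bigl|T_{\alpha,\beta}'(x_n)\bigr|=\log\alpha+\log\!\bigl(1+\beta/(\alpha x_n^2)\bigr)$ with the correction bounded by $C\alpha^{-2n}$, a summable sequence. The finitely many initial terms are finite for a.e.\ $x_0$ (orbits avoid the countable preimage set of $0$), so by Ces\`aro summation $\frac1n\sum_{k=0}^{n-1}\log|T_{\alpha,\beta}'(x_k)|\to\log\alpha$. This yields $\lambda(\alpha,\beta)=\log\alpha$, manifestly independent of $\beta$ in agreement with the conjugacy $P_{\alpha,\beta}$, completing the proof.
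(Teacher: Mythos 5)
Your proof is correct, and it takes a genuinely different---in fact more complete---route than the paper's own argument. The paper passes to the coordinate $z=1/x$, so the map becomes $z_{n+1}=z_n/(\alpha-\beta z_n^2)$, observes that $z=0$ (the point at infinity) is an attracting fixed point with multiplier $1/\alpha<1$ while the finite fixed points $\pm\sqrt{\beta/(\alpha-1)}$ are repelling with multiplier $2\alpha-1>1$, and from this concludes that $|x_n|\to\infty$ for almost every initial point, whence $\lambda=\log\alpha$ as the slope tends to $\alpha$ at infinity. The step from \emph{local} attraction of $z=0$ to the \emph{almost-every} escape statement is precisely what the paper does not prove inside the proof (this is the dissipativity for $\alpha>1$, which the paper attributes to Aaronson only afterwards). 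You correctly identified that step as the hard part and supplied it yourself: forward invariance and monotone escape on $(x_*,\infty)$, plus the exact identity $\sum_{T_{\alpha,\beta}(a)=y}1/|T'_{\alpha,\beta}(a)|=1/\alpha$ (equivalently $\mathrm{Leb}\bigl(T_{\alpha,\beta}^{-1}E\bigr)=\alpha^{-1}\,\mathrm{Leb}(E)$), which your Vieta computation verifies and which makes the survivor set in $[-x_*,x_*]$ Lebesgue-null for \emph{every} $\alpha>1$, not merely $\alpha>2$ as the naive two-branch expansion bound would give. Your tail estimate $\log|T'_{\alpha,\beta}(x_n)|=\log\alpha+O(\alpha^{-2n})$ then makes the Ces\`aro limit rigorous, another point the paper glosses over. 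What the paper's route buys is brevity and a transparent geometric picture (infinity as an attracting fixed point under the conjugacy $x\mapsto 1/z$); what yours buys is a self-contained proof of dissipativity and of the convergence of the Birkhoff average, so it stands on its own without citing external ergodic-theoretic results.
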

\begin{proof}
	In the range of $\alpha>1$, the generalized Boole transformations have two equilibrium points at $x_\pm=\pm \sqrt{\frac{\beta}{\alpha-1}}$.
	Two of them are unstable since {the slope of the liner} inclinations are both $\frac{dT_{\alpha, \beta}x}{dx}(x_\pm)= 2\alpha-1>1$. Now by substituting $x_n$ by $\frac{1}{z_n}$,
	we obtain another form of the generalized Boole transformations as
	\begin{eqnarray}
		z_{n+1} = \frac{z_n}{\alpha-\beta z_n^2}. \label{medified GB2}
	\end{eqnarray}
	The equilibrium points of Eq. (\ref{medified GB2}) are $z=0, \pm\sqrt{\frac{\alpha-1}{\beta}}$. Two points $z= \pm\sqrt{\frac{\alpha-1}{\beta}}$ are unstable.
	The equilibrium point at $z=0$ are stable and attractive because 
	\begin{eqnarray}
		\left|\left. \frac{d}{dz} \left(\frac{z}{\alpha-\beta z^2}\right)\right|_{z=0}\right|<1. 
	\end{eqnarray}
	{Thus}, for almost all initial points it is hold that
	\begin{eqnarray}
		\lim_{n \to \infty} |x_n| = \infty.
	\end{eqnarray}
	Hence, the Lyapunov exponents of the generalized Boole transformations for $\alpha>1$ are given by taking logarithmic function 
	of inclination in the limit of $|x|\to \infty$ as
	\begin{eqnarray}
		\lambda(\alpha, \beta) = \log \alpha.
	\end{eqnarray}
\end{proof}
The complete behavior of Lyapunov exponent as a function of parameter $\alpha$ is shown in Figure \ref{Fig: Lyapunov Exponent}.
\begin{figure}[b]
	\centering
	\hspace*{-1cm}
	\includegraphics[width=.6\columnwidth]{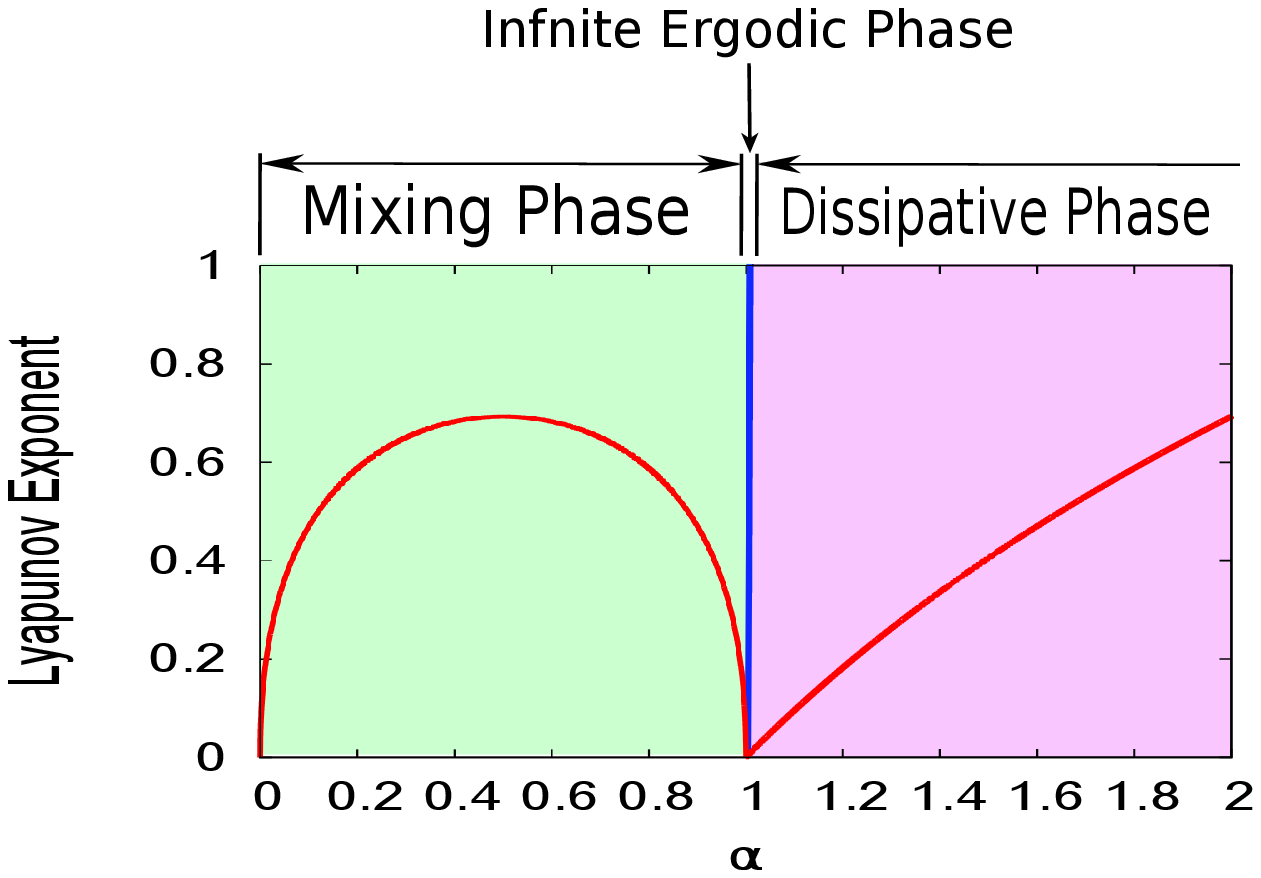}
	\caption{The classification of the phase of the {generalized Boole transformation systems in the parameter space $\alpha$
			via the analytic Lyapunov exponents.}}
	\label{Fig: Lyapunov Exponent}
\end{figure}
According to Ref. \cite{Aaronson97}, the generalized Boole transformations for $\alpha>1$ are \textit{dissipative}
in the sense that, for almost all initial points, $|x_n| \to \infty$ as $n\to \infty$.

Table \ref{Table: classification} summarizes a classification of the phase of systems by {the nature of} invariant measure. 
We can say the system is in \textit{Mixing Phase} for $0<\alpha<1$, 
\textit{Infinite Ergodic Phase} for $\alpha=1$ and \textit{Dissipative Phase} at $\alpha>1$.
\begin{table}[tb]
	\centering
	\caption{	{The classification of phase of dynamical systems, invariant measure and Lyapunov exponents
			in terms of parameter $\alpha$.}}
	\label{Table: classification}
	\renewcommand \arraystretch{1.5}
	
	\begin{tabular}{|c||c|c|c|} \hline
		& $0<\alpha<1$ & $\alpha=1$ & $\alpha>1$ \\ \hline
		Phase & {Mixing} Phase & Infinite Ergodic Phase &  Dissipative Phase\\ \hline
		Invariant measure	$\mu(dx)$	& $\frac{\sqrt{\beta(1-\alpha)}}{\pi[x^2(1-\alpha)+\beta]}dx$ & $dx$ & \\ \hline
		Lyapunov exponents & $\log\left( 1+2\sqrt{\alpha(1-\alpha)}\right)$ & 0 (\textit{subexponential chaos}) & $\log \alpha$ \\ \hline
	\end{tabular}
\end{table}
\begin{table}[tb]
	\centering
	\caption{{The difference between types of Intermittency and critical exponents of the generalized Boole transformation.}}
	\label{Table: classification-CE}
	\renewcommand \arraystretch{1.5}
	
	\begin{tabular}{|c||c|c|c|} \hline
		& $\alpha \to 0+0$ & $\alpha\to 1-0$ & $\alpha \to1+0$ \\ \hline
		Intermittency & \textit{Type 3} & \textit{Type 1} &  \\ \hline
		Critical Exponent & $\frac{1}{2}$ &  $\frac{1}{2}$ & 1  \\ \hline
	\end{tabular}
\end{table}

The standard Cauchy density function $\rho_\gamma(x)$ is defined by a scale parameter $\gamma$
\begin{eqnarray}
	\rho_\gamma(x) = \frac{\gamma}{\pi(x^2 + \gamma^2)}.
\end{eqnarray}
According to the probability preservation relation \cite{Umeno15}, the density functions $\rho_\gamma(x)$ and $\rho_{\gamma'}(y)$ satisfy the relation
\begin{eqnarray}
	\rho_{\gamma'}(y)dy = \sum_{x = T_{\alpha, \beta}^{-1}y} \rho_\gamma (x) dx.
\end{eqnarray}
In the case of $\alpha=\beta= \frac{1}{2}$ \cite{Umeno15}, the scale parameter $\gamma$ of 
Cauchy density function of invariant measure is updated in accordance with
\begin{eqnarray}
	\gamma' =\frac{1}{2}\left(\gamma+ \frac{1}{\gamma}\right).
\end{eqnarray}
Then, $\gamma$ converges to unity. 
Similarly, by the probability preservation relation, the scale parameter of the density function of the generalized Boole transformation is updated in accordance with
\begin{eqnarray}
	\gamma'  = \alpha \gamma + \frac{\beta}{\gamma},
\end{eqnarray}
and it converges to the fixed point $\gamma^*=\sqrt{\frac{\beta}{1-\alpha}}$. \\
As we make $\alpha$ close to unity from below, the scale parameter $\gamma$ converges to infinity and the density function of invariant measure 
becomes the Lebesgue measure $dx$.\\
For $\alpha>1$, the fixed point of scale parameter $\gamma^*$ must be an imaginary number. This fact shows for $\alpha>1$, the density function of 
the invariant measure can \textit{never} be a Cauchy distribution.

{
	Compare with the AL map \cite{Nakagawa14, Akimoto15} whose Lyapunov exponent diverges to infinity at the fixed point, 
	the generalized Boole transformations for $\alpha>1$ have the positive Lyapunov exponent $\log \alpha$ at the attractive fixed point $z=0$.}

\paragraph{4. Scaling behavior at $\alpha=0, 1$}
With parametric analytic formula of Lyapunov exponents, we can investigate a parametric dependency. 
The first derivative of the Lyapunov exponents with respect to $\alpha$ is given by
{
	\begin{eqnarray}
		\frac{d\lambda}{d\alpha} = 
		\begin{cases}
			\frac{1-2\alpha}{\sqrt{\alpha(1-\alpha)}\left(1+2\sqrt{\alpha(1-\alpha)}\right)}, & 0<\alpha<1, \\
			~~~~~~~~~\frac{1}{\alpha}, & \alpha>1.
		\end{cases}
	\end{eqnarray}
}
Remarkably, it diverges in the limit of $\alpha \to 0, {\alpha \to 1-0}$ as
{
	\begin{eqnarray}
		\left| ~\left.\lim_{t\to 0} \frac{d\lambda}{d\alpha}\right|_{\alpha=t} ~\right|= \infty,~~~
		\left|  ~\left.\lim_{t\to 1-0} \frac{d\lambda}{d\alpha}\right|_{\alpha=t} ~\right|= \infty.
	\end{eqnarray}
	This is similar to the well-known critical phenomena that the derivative of magnetization diverges at the critical temperature.
}
This means that a \textit{slight} modification of parameter $\alpha$ 
{toward unity (the Boole transformations) from below} causes a \textit{large} effect on 
the value of Lyapunov exponent {at the edge of $0<\alpha<1$}. Furthermore, we can say that  
it is {\it extremely} difficult to numerically obtain the Lyapunov exponents near {$\alpha=1$ and $0<\alpha<1$ because 
	for $\alpha=1$ \textit{subexponential chaos} \cite{Akimoto,Akimoto15}
	occurs and we cannot determine a Lyapunov exponent for finite calculation.}


{According to Eq. (\ref{modified Generalized Boole})}, 
the generalized Boole transformation at the limit of $\alpha\to 1$ corresponds to the Boole transformation, such that
a transition behavior between chaotic behavior and \textit{subexponential} behavior \cite{Akimoto} can be observed at $\alpha =1$.

Consider the scaling behavior at $\alpha=0, 1$. When {$\alpha\simeq0$ or ``$\alpha\simeq1$ and $0<\alpha<1$''	
	, the relation $0<2\sqrt{\alpha(1-\alpha)} \ll 1$ holds}.
{Thus, we obtain} a Taylor expansion {of Eq. (\ref{Lyapnov指数})} as
\begin{eqnarray}
	\lambda(\alpha) = \sum_{n=1}^{\infty} \frac{(-1)^{n+1}}{n}\left(2\sqrt{\alpha(1-\alpha)}\right)^n. 
\end{eqnarray}
Thus, the first approximation is given by the first term of the Taylor expansion
{
	\begin{eqnarray}
		\lambda &\simeq& 2\sqrt{\alpha(1-\alpha)},\\
		&\simeq& 
		\left\lbrace 
		\begin{array}{ccc}
			2\sqrt{\alpha}, & {\rm for}&\alpha \simeq 0,\\
			2\sqrt{1-\alpha}, & {\rm for}& \alpha \simeq 1~ {\rm and}~ 0<\alpha<1 .
		\end{array}\right.
	\end{eqnarray}
	Then, the scaling behavior of Lyapunov exponents near $\alpha=0,1$ is shown as
	\begin{eqnarray}
		\lambda &\simeq& \left\lbrace 
		\begin{array}{ccc}
			2 \alpha^{\frac{1}{2}}, & {\rm for}& \alpha \simeq 0,\\
			2(1-\alpha)^{\frac{1}{2}}, & {\rm for}& \alpha \simeq 1~{\rm and}~0<\alpha<1,\\
			\alpha-1, & {\rm for}& \alpha \simeq 1~{\rm and}~\alpha>1.
		\end{array}\right.
	\end{eqnarray}
}
{
	In this paper, we define a critical exponent as a power exponent of Lyapunov exponent. 
	A critical exponent $\delta$ is defined by
	\begin{eqnarray}
		\lambda \simeq C|r-r_c|^\delta,
	\end{eqnarray}
	where $r$ is a parameter, $r_c$ is a critical point and $C$ is a constant.
	Therefore, the both critical exponents $\delta$ at the egde of $0<\alpha<1$ 
	of the scaling behavior of Lyapunov exponents are $\displaystyle \frac{1}{2}$.
	The Floquet multiplier for the fixed point for $\alpha=0$ is -1 and for $\alpha=1$ is 1. 
	Thus, our analytic results agree with the prediction by Pomeau and Manneville \cite{Pomeau} that critical exponents of Lyapunov exponents
	of \textit{Type 1} (Floquet multiplier is unity) and \textit{Type 3} (Floquet multiplier is $-1$) intermittency are $\frac{1}{2}$ respectively.
	The critical exponent in the limit of $\alpha \to 1+0$ is different from that of $\alpha \to 1-0$. 
	This remarkable difference in critical exponents of Lyapunov exponents is caused by the difference between 
	the mixing (ergodic) phase for $0<\alpha\leq 1$ and
	the dissipative (non-ergodic) phase for $\alpha>1$.
}

\paragraph{5. Conclusion}
We obtain the analytic formula of Lyapunov exponents for the generalized Boole transformations for the full range of parameters
{and prove the mixing of the map for $0<\alpha<1$. }
We also obtain the KS-entropy by using the Pesin's formula.
As a result, we can explicitly modulate Lyapunov exponent {$\lambda$ in the range of $\lambda \geq0$ by changing parameter $\alpha>0$.}
In addition, the absolute values of the derivative $d\lambda/d\alpha$
at $\alpha=0$ to $\alpha=1-0$ are proven to be {\it infinite}. Then, it is analytically shown to be extremely difficult to distinguish values of Lyapunov exponent
near $\alpha=0$ or $\alpha=1$ because of such strong parameter dependency. 
The scaling behavior of our analytic  Lyapunov exponents of the Boole transformations is found to be consistent with  Pomeau and Manneville's numerical results about intermittency {of \textit{Type 1} and \textit{Type 3}}. 
Thus, we expect that our analytic analysis can be useful for investigating  physically chaotic systems further.


%
\section*{Acknowledgement}
One of the author, Ken-ichi Okubo would like to express his sincere gratitude to Mr. Atsushi Iwasaki at Kyoto University for his wholesome advice.

\appendix

\section{Appendix}

\textit{Proof of Theorem 2}: 
We obtain the relation of Eq. (\ref{modified Generalized Boole}) by substituting $x_n$ by $-\cot \pi\theta_n$ as, 
\begin{eqnarray}
	\cot \pi\theta_{n+1} &=& 2\alpha \cot 2\pi\theta_n, \nonumber\\
	{\rm where}~~~~~~~~~~~~\theta_{n+1} &=& \bar{T}_{\alpha}(\theta_n)=\frac{1}{\pi}{\rm arccot} \left[2\alpha \cot 2\pi\theta_n\right]. \label{cot map}
\end{eqnarray}
The transformations
\begin{eqnarray}
	\bar{T}_{\alpha} : [0, 1) \to [0, 1),
\end{eqnarray}
{have the topological conjugacy relation with Eq. (\ref{modified Generalized Boole})}. 
Consider the following cylinder sets of $\bar{T}_\alpha^{-k}[0, 1)$ on which $\bar{T}_\alpha^k$ is surjective to $[0, 1)$, satisfying
\begin{eqnarray}
	I_{j, k} &=& [\nu_{j, k}, \nu_{j+1, k}),~ \nu_{j,k}<\nu_{j+1,k}, ~\mbox{for}~ 0\leq j\leq 2^k-1,\\
	\nu_{0, k}&=& 0~ \mbox{and}~ \nu_{2^k, k}=1,\\
	\bar{T}_{\alpha}^k I_{j,k} &=& [0, 1),\\
	\mu(I_{j,k}) &=& \frac{1}{2^k}.
\end{eqnarray}
For any measurable set $A$, and $\bar{T}_\alpha$-invariant measure $\mu$, consider, for arbitrary $n(\geq k)$
\begin{eqnarray}
	\mu\left(\bar{T}_\alpha^{-n}A \cap B\right),
\end{eqnarray}
where it is enough to assume $B$ is $\left[\nu_{j, k}, \nu_{j+1,k}\right)$  without loss of generality. 
Here, $\mu(B)= \frac{1}{2^k}$ and the number of the intervals of $\bar{T}_\alpha^{-n}A$ 
which are included by $B$ is $2^{n-k}$. Thus, the measure of $\bar{T}_\alpha^{-n}A$ on each interval $I_{j,k}$
whose measure $\mu(I_{j,k})$ is \textit{equidistributed}, is $2^{-n}\mu(A)$. Then, for arbitrary $n(\geq k)$,
\begin{eqnarray}
	\mu\left(\bar{T}_\alpha^{-n}A \cap B\right) = 2^{n-k}\cdot (2^{-n}\mu(A)) = \mu(A)\cdot \mu(B)
\end{eqnarray}
Thus, the dynamical system $([0, 1), \bar{T}_\alpha, \mu)$ has mixing property according to the Ref. \cite{アーノルド}. 
Therefore, the generalized Boole transformations for $0<\alpha<1$ {have mixing property}.	\qed

\begin{figure}[!h]
	\centering
	\begin{tabular}{cc}
		\begin{minipage}{0.45\hsize}
			\vspace*{0.2cm}
			\includegraphics[width= 1.1\columnwidth]{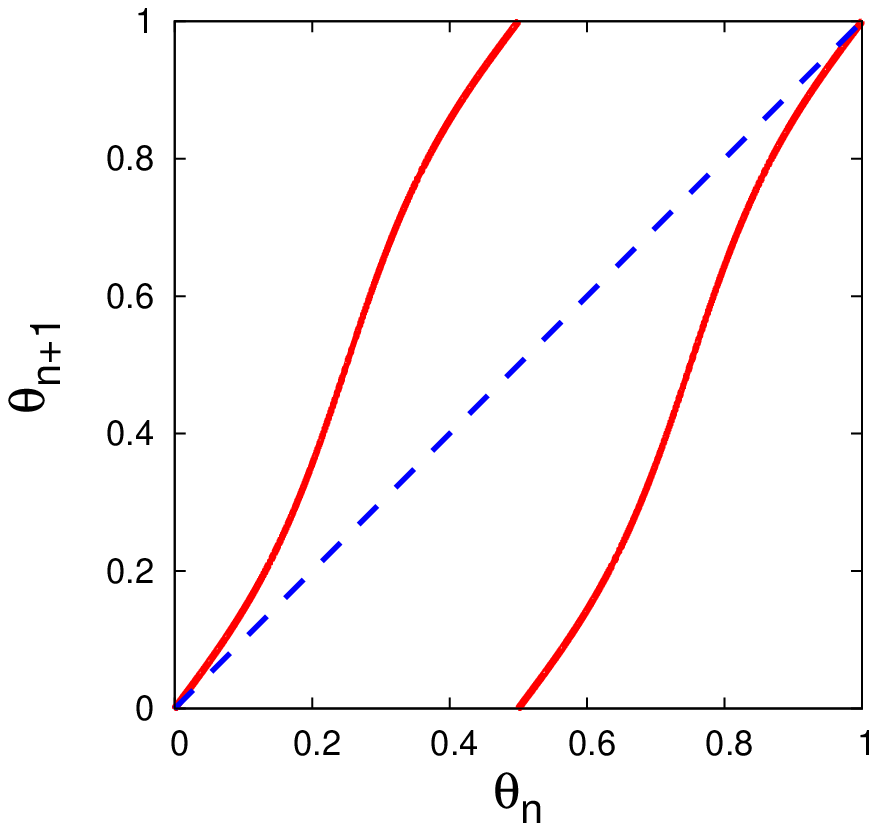}
			\vspace*{1.5cm}
			\caption{Solid lines correspond to the transformation $\bar{T}_{0.75}$ which is topologically conjugate with the generalized Boole transformation 
				$T_{\alpha=0.75, \beta=0.75}$ and a dashed line corresponds to $\theta_{n+1}= \theta_n$. 
				The transformations $\bar{T}_{\alpha}$ monotonically increase in the interval $[0, 0.5)$ or $[0.5, 1)$.}
		\end{minipage}
		\hspace*{5mm}
		
		\begin{minipage}{0.45\hsize}
			\includegraphics[width= .9\columnwidth]{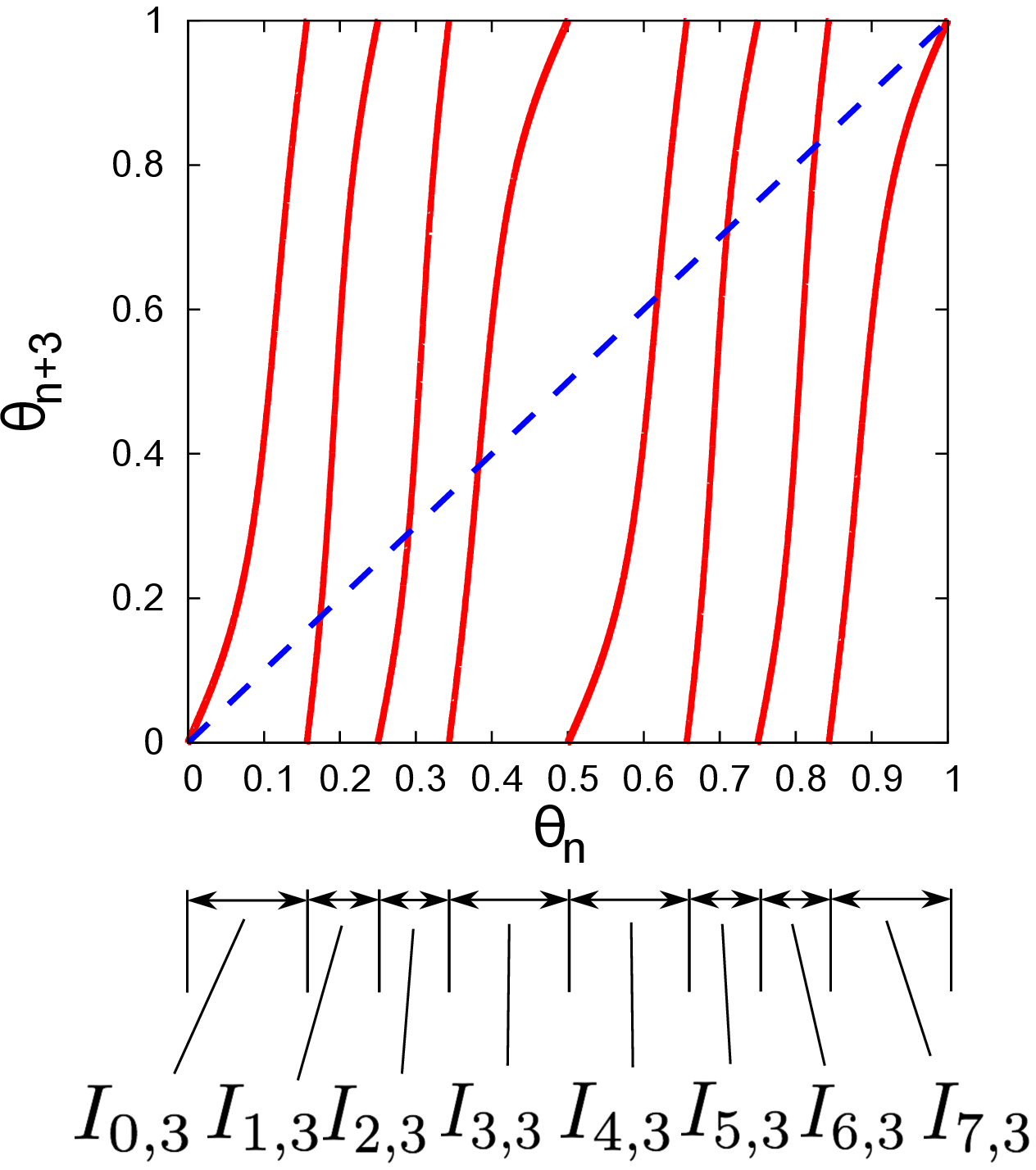}
			\caption{Solid lines correspond to the transformation $\bar{T}_{0.75}^3$ which is topologically conjugate with the generalized Boole transformation 
				$T_{\alpha=0.75, \beta=0.75}^3$ and a dashed line corresponds to $\theta_{n+3}= \theta_n$. Here, the measure of each $I_{j,3} (0\leq j \leq 7)$ is equidistributed.}
		\end{minipage}
	\end{tabular}
\end{figure}

\vspace*{-0.5cm}

\end{document}